\DocumentMetadata{lang=en-GB,tagging=on}
\documentclass[a4paper]{article}
\usepackage[
unicode,
pdfauthor={Jonah Stalknecht, 金炯錄},
pdftitle={Which Functions Admit a Positive Geometry? From Branch Cuts to String Amplitudes},
pdfsubject={hep-th,math.CV},
breaklinks=true,
final
]{hyperref}
\usepackage[utf8]{inputenc}
\usepackage[T1,T2A]{fontenc}
\usepackage[unicode,final]{hyperref}
\usepackage{amsmath,amssymb,amsthm,orcidlink,cleveref,mleftright,mathtools}
\usepackage[noadjust,nosort]{cite}
\usepackage{CJKutf8}
\usepackage[final]{microtype}
\usepackage[latin,russian,french,ngerman,british]{babel}
\usepackage[osf]{newpxtext}
\usepackage[varbb,slantedGreek]{newpxmath}
\renewcommand\pi\piup
\renewcommand\Gamma\Gammaup
\renewcommand\gamma\gammaup
\renewcommand\Re{\operatorname{Re}}
\renewcommand\Im{\operatorname{Im}}
\usepackage{authblk}

\newtheorem{theorem}{Theorem}
\newtheorem{lemma}{Lemma}
\theoremstyle{definition}
\newtheorem{definition}{Definition}
\theoremstyle{remark}
\newtheorem{example}{Example}
\title{Which Functions Admit a Positive Geometry? From Branch Cuts to String Amplitudes}
\author[1]{Hyungrok Kim}
\author[2]{Jonah Stalknecht}
\affil[1]{Centre for Mathematics and Theoretical Physics Research,\authorcr Department of Physics, Astronomy and Mathematics, \authorcr University of Hertfordshire, Hatfield, UK\authorcr\href{mailto:h.kim2@herts.ac.uk}{\texttt{h.kim2@herts.ac.uk}}}
\affil[2]{Institute for Particle and Nuclear Physics, \authorcr Charles University, Prague, Czech Republic\authorcr\href{mailto:jonah.stalknecht@matfyz.cuni.cz}{\texttt{jonah.stalknecht@matfyz.cuni.cz}}}

\begin{document}
	
\maketitle

\begin{abstract}
	Positive geometry provides a geometric framework where physical observables are encoded as canonical forms associated to regions of kinematic space. In this paper we consider a generalisation to an infinite union of line segments, which allows us to capture canonical forms beyond rational functions. In the continuum limit of positive geometries, we show that we can generalise even further and describe positive geometries whose canonical forms contain branch cuts. We will constrain which functions can be obtained as the canonical form of one-dimensional positive geometries. We introduce the notion of the \emph{pseudogenus} to classify meromorphic function, and show that canonical forms can be written as the $\mathrm{d}\log$ of a function with pseudogenus zero. Furthermore, we argue that
	the spectrum encoded by a union of line segments is consistent with the presence of a stringy tower of states or a Kaluza--Klein tower with three or more compact directions only if nearly all such states do not contribute to the scattering amplitude. In addition, we show how the $\mathrm{d}\log$ of both open and closed string amplitudes admits a positive geometry. This allows us to give a fully geometric interpretation for the KLT double copy at four points.
\end{abstract}

\tableofcontents

\section{Introduction and Summary}
In recent years, positive geometries \cite{Arkani-Hamed:2017tmz} have emerged as a powerful framework for reinterpreting various physical quantities in a purely geometric way. Scattering amplitudes in various field theories, as well as correlators in cosmology, have been shown to admit such a geometric formulation. The usual philosophy is as follows: one starts by defining some geometric region in a suitable space, and then associate to it a unique logarithmic \emph{canonical form}. In this way, a differential form (or a function) is extracted directly from this geometry. Prominent examples include the amplituhedron \cite{Arkani-Hamed:2013jha} or the Arkani-Hamed--Bai--He--Yan (ABHY) associahedron \cite{Arkani-Hamed:2017mur}, where the relevant geometry lives in the space of kinematic variables, and the functions we extract are tree-level amplitudes or a loop-level integrands in quantum field theory.

A fundamental limitation of the traditional positive geometry framework is that it only allows for \emph{rational functions} to appear in the canonical form. Many quantities of interest fall outside of this class of functions, which makes any geometric interpretation opaque. Recently, however, progress has been made to overcome this obstacle. In \cite{Bartsch:2025mvy}, it was shown how the \emph{inverse Kawai--Lewellen--Tye (KLT) kernel} and various \emph{stringy scalar amplitudes} admit a positive geometric realisation. By considering an infinite sum of convex polytopes, they show how the canonical form can produce various trigonometric functions. For example, the infinite union of line segments
\begin{align*}
	\mathcal{A}_4^{\alpha'}\coloneqq \bigcup_{k\in\mathbb{Z}}\{\frac{k}{\alpha'}\leq z\leq \frac{k}{\alpha'}+c\},
\end{align*}
has a canonical form
\begin{align*}
	\Omega(\mathcal{A}_4^{\alpha'})=\mathrm{d}\log\frac{\sin(\pi\alpha' z)}{\sin(\pi\alpha'(c-z))} = \left(\frac{\pi}{\tan(\pi\alpha'z)}+\frac{\pi}{\tan(\pi\alpha'(c-z))}\right)\mathrm{d} z.
\end{align*}
The lattice structure of $\mathcal{A}_4^{\alpha'}$ captures the infinite tower of states which are present in these string-inspired functions.

Once we allow for such infinite positive geometries, it is natural to ask: \emph{what functions can be expressed using positive geometry? Can every scattering amplitude be represented in terms of positive geometries?} In this paper, we will answer these questions for the case of one-dimensional positive geometries. Using \emph{Hadamard's factorisation theorem}, we will classify the space of functions which can be obtained from any (possibly infinite) sum of line segments, which represent four-point scalar amplitudes. The relevant notion will turn out to be that of \emph{pseudogenus}, which we will define in section \ref{sec:complex-analysis}. This is a subtly different notion from the genus of a function from standard complex analysis, where the pseudogenus no longer requires absolute convergence. Infinite one-dimensional positive geometries have canonical forms that can be written as the $\mathrm{d}\log$ of functions with pseudogenus zero.  Additionally, we show that the spectrum encoded by an infinite union of line segments is consistent with the presence of a stringy tower of states or a Kaluza--Klein (KK) tower with three or more compact directions only if nearly all such states do not contribute to the scattering amplitude.

We furthermore show that the four-point open (Veneziano) and closed (Virasoro--Shapiro) string amplitudes have pseudogenus zero themselves, which means that their $\mathrm{d}\log$ admits a positive geometry given by an infinite sum of line segments. Since the same is true for the (inverse) KLT kernel, we can use this to give a completely geometric interpretation of the KLT double copy relations at four points. 

Finally, we will consider what happens in the \emph{continuum limit} of positive geometries. When the infinite sum of line segments become infinitesimally small and infinitesimally close together, the poles at the boundaries bunch together to form a branch cut. This is an important generalisation of the discrete case, and greatly increases the space of functions which we can capture in this framework. The relevant notion to consider is now the \emph{density of the geometry} $\rho(t)$. The canonical form of this density is then given by the derivative of the Cauchy transform of $\rho$. Using the \emph{Stieltjes--Perron inversion formula}, we can then check which function $\rho$ (if any) gives rise to a desired canonical form.

\section{Review of Complex Analysis}\label{sec:complex-analysis}
In representing a function \(f(z)\) as the canonical form \(\mathrm d\log f(z)=\Omega(S)\) of a one-dimensional positive geometry \(S\), the end-points of \(S\) correspond to zeros and poles of \(f\). The question of representability of such functions is therefore of the theory of (factorisations of) meromorphic functions, which forms part of the branch of complex analysis known as \emph{Nevanlinna theory}. When we generalise to continuum limits of such positive geometries, instead we obtain the \emph{Cauchy transform}. We therefore review these two subjects.

Most of the contents in this section is well known, with citations given to standard literature, except for the \emph{pseudogenus} and related results, which are (as far as the authors are aware) new to this work.

\subsection{Elements of Nevanlinna theory}\label{ssec:nevanlinna}
Nevanlinna theory concerns the distribution of values of meromorphic functions (and, in particular, of its zeros and poles). A basic reference is \cite{goldberg}.

The basic technical tool in Nevanlinna theory is the Nevanlinna characteristic, defined as follows.
\begin{definition}
Let \(f\) be a meromorphic function. The \emph{Nevanlinna characteristic} of \(f\) is
\begin{equation}
    T_f(r) = m_f(r) + N_f(r)
\end{equation}
where
\begin{equation}
    m_f(r) = \frac1{2\pi}\int_0^{2\pi}\max\{0,\log|f(r\mathrm e^{\mathrm i\theta})|\}\,\mathrm d\theta
\end{equation}
and
\begin{equation}
    N_f(r) = \left(\int_0^rn_f(t)\frac{\mathrm dt}t\right) + n_f(0)\log r,
\end{equation}
where \(n_f(r)\) is the number of poles (counted according to multiplicity) of \(f\) at \(|z|\le r\). Roughly, \(m_f(r)\) is the average absolute value of \(f\) on the circle of radius \(r\), while \(N_f(r)\) is a weighted count of poles of \(f\).
\end{definition}
Intuitively, the Nevanlinna characteristic therefore combines an estimate of how fast \(f\) grows and how many poles \(f\) has. It is far from obvious, however, that the above sum behaves reasonably --- in particular, the above definition seems to treat zeros and poles on a very different footing. Nevertheless, remarkably, zeros and poles are nearly equivalent for the Nevanlinna characteristic in the following sense.
\begin{theorem}[First fundamental theorem of Nevanlinna theory]
    For any meromorphic function \(f\) and any \(a\in\mathbb C\),
    \begin{equation}T_{1/(f-a)}(r) =T_f(r)+ \mathcal O(1),\end{equation}
    where the \(\mathcal O(1)\) constant can depend on \(f\) and \(a\) but not on \(r\).
\end{theorem}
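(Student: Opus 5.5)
The plan is to derive the statement from Jensen's formula, extended to meromorphic functions by Poisson--Jensen. Let \(g\) be meromorphic with Laurent expansion \(g(z)=c_g z^{k}+\dots\) at the origin, where \(c_g\neq0\). Jensen's formula reads
\begin{equation*}
\frac1{2\pi}\int_0^{2\pi}\log|g(r\mathrm e^{\mathrm i\theta})|\,\mathrm d\theta = \log|c_g| + N_{1/g}(r) - N_g(r),
\end{equation*}
where \(N_{1/g}\) counts zeros of \(g\) and \(N_g\) counts poles. The term \(n_f(0)\log r\) in the definition of \(N_f\) is exactly what absorbs the order \(k\) of \(g\) at the origin. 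To prove this, first factor out \(z^{k}\). Then apply the classical Jensen formula to \(g\) divided by a finite Blaschke-type product removing the zeros and poles in \(|z|<r\), and integrate by parts to convert \(\sum\log(r/|a_j|)\) into \(\int_0^r n(t)\,\mathrm dt/t\).

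Next, split \(\log|g| = \log^+|g| - \log^+|1/g|\), where \(\log^+=\max\{0,\log\}\). Jensen's formula then becomes
\begin{equation*}
m_g(r) - m_{1/g}(r) = \log|c_g| + N_{1/g}(r) - N_g(r).
\end{equation*}
Rearranging gives \(T_g(r) = T_{1/g}(r) + \log|c_g|\). This is already the theorem for \(a=0\), and the error is an honest constant.

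For general \(a\), apply this identity to \(g=f-a\). This gives \(T_{1/(f-a)}(r)=T_{f-a}(r)+\mathcal O(1)\), where the constant is \(-\log|c_{f-a}|\); if \(f\equiv a\) the statement is vacuous or excluded. It remains to compare \(T_{f-a}\) with \(T_f\). The poles of \(f-a\) and of \(f\) coincide with multiplicity, so \(N_{f-a}=N_f\) exactly. For the proximity functions, use the elementary inequalities \(\log^+|u+v|\le\log^+|u|+\log^+|v|+\log2\) and \(\log^+|u|\le\log^+|u-a|+\log^+|a|+\log 2\). Averaging over the circle gives \(|m_{f-a}(r)-m_f(r)|\le\log^+|a|+\log2\). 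Combining, \(T_{1/(f-a)}(r)=T_f(r)+\mathcal O(1)\) with a constant depending only on \(a\) and the leading Laurent coefficient of \(f-a\) at \(0\), and not on \(r\).

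The main obstacle is proving Jensen's formula carefully in the meromorphic setting with the paper's normalisation of \(N_f\). Two points need care. One is handling zeros or poles lying on \(|z|=r\), which is done by continuity, since the integral of \(\log|z-a|\) over the circle is still finite and equals the limiting value. The other is bookkeeping the contribution at the origin so that the \(n_f(0)\log r\) term and the coefficient \(c_g\) match up. Once that formula is in hand, the rest is the two-line \(\log^+\) estimate.
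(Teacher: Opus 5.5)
The paper gives no proof of this theorem; it quotes it from the standard reference. Your argument is the standard textbook proof: Jensen's formula gives \(T_g(r)=T_{1/g}(r)+\log|c_g|\), and then \(N_{f-a}=N_f\) together with the \(\log^+\) inequality gives \(|m_{f-a}-m_f|\le\log^+|a|+\log 2\). The proposal is correct.

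One caveat concerns your bookkeeping at the origin. It works only with the standard normalisation \(N_f(r)=\int_0^r (n_f(t)-n_f(0))\,\mathrm dt/t+n_f(0)\log r\), which is exactly what your integration by parts over the non-zero zeros and poles produces. The paper's displayed definition omits the subtraction of \(n_f(0)\), which would make \(N_f(r)\) infinite whenever \(f\) has a pole at \(0\). You should state explicitly that you are using the standard definition.
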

Furthermore it is compatible with addition and multiplication in the sense that
\begin{align}
    T_{fg} &\le T_f+T_g + \mathcal O(1),&
    T_{f+g} &\le T_f+T_g + \mathcal O(1),&
    T_{af+b} &= T_f + \mathcal O(1)
\end{align}
for meromorphic \(f,g\) and constants \(a,b\in\mathbb C\).

Using the Nevanlinna characteristic, we may define the class of nicely behaved meromorphic functions as those with a finite order, where order is defined as follows:
\begin{definition}
    The \emph{order} of a meromorphic function \(f\) is
    \begin{equation}
        \operatorname{ord}f = \limsup_{r\to\infty}\frac{\max\{0,\log T_f(r)\}}{\log r}.
    \end{equation}
\end{definition}
In particular, for an entire function \(f\) (i.e.\ meromorphic function without poles), if \(f\) has order \(k\), then \(f\) grows as \(|f(z)|\sim a\exp(b|z|^k)\) for some constants \(a\) and \(b\). It follows from the properties of the Nevanlinna characteristic that \(f\) and \(1/f\) have the same order and that \(\operatorname{ord}(fg) \le \max\{\operatorname{ord} f,\operatorname{ord}g\}\) and \(\operatorname{ord}(f+g)\le\max\{\operatorname{ord}f+\operatorname{ord}g\}\).

Given a meromorphic function \(f(z)\colon\mathbb C\to\mathbb C\sqcup\{\infty\}\) on the complex plane, we wish to represent \(f(z)\) as a product of factors over its zeros and poles. For most \(f\), unfortunately, a naïve such attempt fails due to issues of convergence.
The \emph{Weierstrass canonical factor} is the function \cite[p.~55]{goldberg}
\begin{equation}
\begin{aligned}
    E_k(z) &= (1-z) \exp\mleft(
        z + z^2/2 + \dotsb + z^k/k
    \mright)\\
    &= \exp\mleft(
        \log(1-z)
        +z+z^2/2+\dotsb+z^k/k
    \mright)\\
    &=\exp\mleft(
        \log(1-z)
        - \left[\text{first \(k\) terms of the Taylor series of \(\log(1-z)\)}\right]
    \mright).
\end{aligned}
\end{equation}
That is, \(E_k(z)\) approximates \(1-z\) but drops the first \(k\) terms in the logarithmic Taylor series \(1-z = \exp(-z-z^2/2-z^3/3-\dotsb)\); this will improve convergence when we take a product over all zeros and poles.

Another technical concept we require is the genus of a sequence.
\begin{definition}[{}{\cite[p.~57]{goldberg}}]
    The \emph{genus} of a sequence of non-zero complex numbers \(a=(a_1,a_2,\dotsc)\) is the smallest integer \(p\) such that the sequence \(1/|a|=(|a_1|^{-1},|a_2|^{-2},\dotsc)\) belongs to \(\ell^{p+1}\), that is,
    \(
        \sum_i|a_i|^{-p-1}<\infty
    \).
\end{definition}
The higher the genus, the better behaved the sequence is in terms of convergence for the purposes of the Hadamard factorisation theorem.

With these tools, we may state the Hadamard factorisation theorem for meromorphic functions.
\begin{theorem}[Hadamard factorisation theorem {\cite[Thm.~4.1]{goldberg}}]
    Let \(f\) be a meromorphic function of finite order. Suppose that \(f(z)\) has a zero of order \(n\) at \(z=0\) (where \(n>0\) for zeros and \(n<0\) for poles and \(n=0\) otherwise).
    Let the zeros be \(a_i\) and the poles \(b_i\), counted according to multiplicity. Let \(p\) be the genus of \(a_i\); let \(q\) be the genus of \(b_i\). Then \(f\) can be canonically represented as
    \begin{equation}\label{eq:hadamard-factorisation}
        f(z) = z^n \exp(P(z))
        \prod_iE_p(z/a_i)
        \prod_i\frac1{E_q(z/b_i)},
    \end{equation}
    where \(P\) is a polynomial of degree at most \(\lfloor\operatorname{order}(f)\rfloor\).
\end{theorem}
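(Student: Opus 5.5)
The plan is to reduce to the entire case and then run the classical Hadamard argument on each factor separately.

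\emph{Reduction to entire functions.} Write \(N_0(r)\) and \(N_\infty(r)\) for the counting functions of the zeros \(a_i\) and the poles \(b_i\). Let \(\rho=\operatorname{ord}f\). By the first fundamental theorem, \(N_0(r)\le T_{1/f}(r)=T_f(r)+\mathcal O(1)\), and by definition \(N_\infty(r)\le T_f(r)\). Both are therefore \(\mathcal O(r^{\rho+\epsilon})\) for every \(\epsilon>0\), and so are the unintegrated counts \(n(r)\), using \(n(r)\log 2\le N(2r)-N(r)\). Partial summation then gives \(\sum|a_i|^{-\sigma}<\infty\) and \(\sum|b_i|^{-\sigma}<\infty\) for all \(\sigma>\rho\). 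It follows that the genera satisfy \(p,q\le\lfloor\rho\rfloor\) and that the canonical products
\[
\Pi_a(z)=\prod_iE_p(z/a_i),\qquad \Pi_b(z)=\prod_iE_q(z/b_i)
\]
converge locally uniformly. Here we use the standard bound \(|\log E_k(u)|\le C|u|^{k+1}\) for \(|u|\le\tfrac12\). These products are entire, with exactly the prescribed zeros, and by Borel's theorem each has order at most \(\rho\). Now set
\[
g(z)=\frac{f(z)\,\Pi_b(z)}{z^n\,\Pi_a(z)}.
\]
This \(g\) is entire and zero-free. By the inequalities \(T_{fg}\le T_f+T_g+\mathcal O(1)\) together with \(T_{1/h}=T_h+\mathcal O(1)\), we get \(\operatorname{ord}g\le\rho\).

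\emph{Identifying \(g=\mathrm e^{P}\).} Since \(g\) has no zeros, \(g=\mathrm e^{h}\) for some entire \(h\). Because \(g\) has no poles, \(T_g(r)=m_g(r)\). For zero-free entire \(g\) this gives \(\int_0^{2\pi}\max\{0,\Re h(r\mathrm e^{\mathrm i\theta})\}\,\mathrm d\theta=\mathcal O(r^{\rho+\epsilon})\). Applying the same to \(1/g\) gives the analogous bound for the negative part, so \(\int_0^{2\pi}|\Re h|\,\mathrm d\theta=\mathcal O(r^{\rho+\epsilon})\). The Taylor coefficients of \(h=\sum c_kz^k\) satisfy
\[
c_kr^k=\frac1\pi\int_0^{2\pi}\Re h(r\mathrm e^{\mathrm i\theta})\,\mathrm e^{-\mathrm ik\theta}\,\mathrm d\theta\qquad(k\ge1).
\]
Hence \(|c_k|\le \pi^{-1}r^{-k}\cdot\mathcal O(r^{\rho+\epsilon})\to0\) as \(r\to\infty\) whenever \(k>\rho\). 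So \(h=P\) is a polynomial of degree at most \(\lfloor\rho\rfloor\). Rearranging the definition of \(g\) gives \eqref{eq:hadamard-factorisation}.

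The main obstacle is the growth estimate for the canonical products, namely showing \(T_{\Pi_a}(r)=\mathcal O(r^{\rho+\epsilon})\). The first attempt is Borel's standard split. For \(|a_i|\ge 2r\), use \(\log|E_p(z/a_i)|\le C|z/a_i|^{p+1}\). For \(|a_i|<2r\), use \(\log|E_p(u)|\le\log(1+|u|)+\sum_{j\le p}|u|^j/j\le C|u|^p\) for \(|u|\ge\tfrac12\), and bound the resulting sums via partial summation against \(n(t)=\mathcal O(t^{\rho+\epsilon})\). Only an upper bound on \(\log^+|\Pi|\) is needed, since \(T\) counts positive parts. This avoids the usual minimum-modulus difficulties, which the entire-function version would otherwise require.

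Canonicity of the representation follows because the zeros, the poles and \(n\) are intrinsic to \(f\). The factor \(\mathrm e^{P}\) is then determined as the quotient of \(f\) by the remaining factors, so \(P\) is fixed up to an additive constant in \(2\pi\mathrm i\mathbb Z\).
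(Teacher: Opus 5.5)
The paper gives no proof of its own for this statement; it only cites Goldberg--Ostrovskii. Your proof is the standard Nevanlinna-theoretic argument behind that citation, and it is correct. The first fundamental theorem bounds \(N_0\) and \(N_\infty\) by \(T_f+\mathcal O(1)\). Borel's theorem bounds the orders of the canonical products by \(\rho\). Finally, the identity \(c_kr^k=\pi^{-1}\int_0^{2\pi}\Re h(r\mathrm e^{\mathrm i\theta})\mathrm e^{-\mathrm ik\theta}\,\mathrm d\theta\) has modulus at most \(2(T_g+T_{1/g})\), which forces \(\deg h\le\lfloor\rho\rfloor\) without any minimum-modulus estimate.

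Two details in your sketch of Borel's theorem need repair if you want that step to be self-contained.

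\emph{The case \(p=0\).} Your bound \(\log|E_p(u)|\le C|u|^p\) on \(|u|\ge\frac12\) reads \(\log|E_0(u)|\le C\) here, which is false because \(\log|1-u|\) is unbounded. You can instead use \(\log(1+|u|)\le C_\delta|u|^\delta\) for a small \(\delta>0\). Alternatively, partial summation gives \(\sum_{|a_i|<2r}\log(1+r/|a_i|)=\mathcal O(n(2r)+N(2r))\) directly. This is the case the paper actually relies on.

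\emph{The tail sum.} The tail \(r^{p+1}\sum_{|a_i|\ge2r}|a_i|^{-p-1}\) cannot be bounded by partial summation against \(n(t)=\mathcal O(t^{\rho+\epsilon})\) when \(p<\lfloor\rho\rfloor\), for instance when the zeros are much sparser than the poles. In that case \(\int_{2r}^\infty t^{\rho+\epsilon-p-2}\,\mathrm dt\) diverges. You must use the exponent of convergence \(\lambda_a\le\rho\) of the zeros themselves:
\begin{itemize}
\item if \(\lambda_a<p+1\), take \(n(t)=\mathcal O(t^{\lambda_a+\epsilon})\) with \(\lambda_a+\epsilon<p+1\);
\item if \(\lambda_a=p+1\), use that the tail of the convergent series \(\sum|a_i|^{-p-1}\) is \(o(1)\), which gives \(o(r^{p+1})\).
\end{itemize}

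Both fixes are routine and are exactly what Borel's theorem packages, so your main argument is unaffected.
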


\begin{definition}[{}{\cite[p.\ 58]{goldberg}}]
    The \emph{genus} of a meromorphic function \(f\) of finite order is the smallest non-negative integer \(g\) such that
    \begin{equation}\label{eq:genus}
        f(z) = z^n \exp(P(z))
        \prod_{a\in\operatorname{zeros}(f)}E_g(z/a)
        \prod_{b\in\operatorname{poles}(f)}\frac1{E_g(z/b)}
    \end{equation}
    for some integer \(n\) and polynomial \(P\) of degree at most \(g\); that is, \(g=\max\{k,p,q\}\) in the representation \eqref{eq:hadamard-factorisation}.
\end{definition}

\begin{theorem}[{}{\cite[p.\ 58]{goldberg}}]
    For a meromorphic function \(f\) with genus \(g\) and order \(\rho\),
    \(\rho-1 \le g \le \rho\).
\end{theorem}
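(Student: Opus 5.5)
We prove the two inequalities separately, starting from the Hadamard representation \eqref{eq:hadamard-factorisation}. In that representation \(P\) has degree at most \(\lfloor\rho\rfloor\). The genera \(p,q\) of the zeros and poles are bounded by the convergence exponents of these sequences. The genus of \(f\) is \(g=\max\{\deg P,p,q\}\).

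\emph{Upper bound \(g\le\rho\).} We already have \(\deg P\le\lfloor\rho\rfloor\le\rho\), so it remains to show \(p,q\le\rho\). For this we bound the counting functions using the first fundamental theorem. Since \(N_f(r)\le T_f(r)\) and \(T_{1/f}=T_f+\mathcal O(1)\), both \(N_f\) and \(N_{1/f}\) are \(\mathcal O(r^{\rho+\epsilon})\) for every \(\epsilon>0\). Because \(n(r)\log 2\le\int_r^{2r}n(t)\,\mathrm dt/t\le N(2r)\), the counts of zeros and of poles are also \(\mathcal O(r^{\rho+\epsilon})\). Stieltjes integration by parts then gives \(\sum_i|a_i|^{-s}=\int t^{-s}\,\mathrm dn(t)<\infty\) for every \(s>\rho\), and likewise for the \(b_i\). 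Taking \(s=\lfloor\rho\rfloor+1>\rho\) shows that \(p\) and \(q\) are at most \(\lfloor\rho\rfloor\). Hence \(g\le\lfloor\rho\rfloor\le\rho\).

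\emph{Lower bound \(\rho\le g+1\).} Write \(f=z^n e^{P}\Pi_1/\Pi_2\), where \(\Pi_1,\Pi_2\) are the canonical products of genus \(g\) over the zeros and the poles. By the subadditivity of \(T\) (and \(T_{1/h}=T_h+\mathcal O(1)\)), it suffices to show that each factor has order at most \(g+1\).
\begin{itemize}
\item The factor \(z^n\) has order \(0\).
\item The factor \(e^P\) has order \(\deg P\le g\).
\item For a genus-\(g\) canonical product \(\Pi=\prod E_g(z/a_i)\) with \(\sum|a_i|^{-g-1}<\infty\), we use the standard estimate \(\log|E_g(u)|\le C_g\min\{|u|^{g+1},|u|^{g}\}\) for \(g\ge1\), and \(\log|E_0(u)|\le\log(1+|u|)\le|u|\).
\end{itemize}
Summing this estimate over \(i\), and splitting into \(|a_i|\le r\) and \(|a_i|>r\), gives \(\log|\Pi(z)|\le C r^{g+1}\sum_i|a_i|^{-g-1}\) up to lower-order terms. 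Therefore \(\log M_\Pi(r)=\mathcal O(r^{g+1})\), so \(T_\Pi(r)\le\log^+M_\Pi(r)=\mathcal O(r^{g+1})\). Hence \(\operatorname{ord}\Pi\le g+1\), and combining the factors gives \(\rho\le g+1\).

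The main obstacle is the elementary estimate on \(\log|E_g(u)|\) that is uniform in both regimes of \(|u|\). For \(|u|\le\tfrac12\) it follows from the tail of the Taylor series of \(\log(1-u)\), giving a bound of order \(|u|^{g+1}\). For \(|u|\ge\tfrac12\) it follows from \(\log|1-u|\le\log(1+|u|)\) together with the polynomial part, which is \(\lesssim|u|^{g}\). Once this estimate is in hand, the splitting of the sum is routine.
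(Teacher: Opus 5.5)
Your proof is correct. The paper gives no argument of its own here (it only cites Goldberg), and what you reproduce is the standard textbook proof from that source. For $g\le\rho$ you use Hadamard's bound $\deg P\le\lfloor\rho\rfloor$ together with $N\le T+\mathcal O(1)$, which forces $\sum_i|a_i|^{-\lfloor\rho\rfloor-1}<\infty$. For $\rho\le g+1$ you use $\log M_\Pi(r)=\mathcal O(r^{g+1})$ for genus-$g$ canonical products; your splitting into $|a_i|\le r$ and $|a_i|>r$ is not even needed there, since $\min\{x^{g+1},x^g\}\le x^{g+1}$ gives $\log|E_g(u)|\le C_g|u|^{g+1}$ for all $u$ directly.
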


In the above, convergence is meant in an \emph{absolute} sense, i.e.\ it does not depend on an ordering of the zeros and poles of \(f\).
In particular, if we wish to represent \(f\) as an absolutely convergent product of factors of the form \((1-z/a)=E_0(z/a)\) only, then the necessary and sufficient condition for this to be possible is for \(f\) to have genus zero.

On the other hand, in physical applications it is often convenient to only consider non-absolute convergence, where we order the factors in \eqref{eq:genus} in a particular way (e.g.\ according to the absolute values of the zeros and poles). This motivates the following (non-standard) definition.
\begin{definition}
    Suppose \(f\) is a meromorphic function of finite order that is not identically zero.
    Let the set of poles and zeros of \(f\) (except for those at \(z=0\)) be \(a_1,a_2,\dotsc,\), \emph{not} counting multiplicity, and ordered according to increasing absolute value (with ties broken arbitrarily); let \(k_i\) be the non-zero integer such that the leading Laurent series of \(f\) at \(a_i\) is \(f(z)=c_k(z-a_i)^k+c_{k+1}(z-a_i)^{k+1}+\dotsb\) with \(c_k\ne0\). (That is, if \(k_i>0\), then \(a_i\) is a zero of order \(k_i\); if \(k_i<0\), then \(a_i\) is a poles of order \(-k_i\).)
    The \emph{pseudogenus} of a meromorphic function \(f\) is the smallest positive integer \(\psi\) such that
        \begin{equation}\label{eq:def-genus-rel}
            f(z) = z^n\exp(P(z)) \prod_i
            \left(E_\psi(1-z/a_i)\right)^{k_i},
        \end{equation}
        where \(n\) is an integer and
        where \(P(z)\) is a polynomial of degree at most \(\psi\), and \eqref{eq:def-genus-rel} is meant in the sense of uniform convergence on any compact subset of \(\mathbb C\) for the ordering of \(\{a_i\}\) according to increasing absolute values.
\end{definition}
The essential difference between the genus and the pseudogenus lies in the requirement of absolute convergence for the former.

The pseudogenus can differ from the genus by an arbitrary amount, as the following example shows.
\begin{example}
    Consider the set of zeros given by
    \begin{equation}
        a_n = \sqrt{\lceil n/4\rceil}\mathrm i^{n-1},
    \end{equation}
    such that
    \begin{equation}
       \{a_n\}= \{ 1, \mathrm i, -1, -\mathrm i, \sqrt{2}, \mathrm i\sqrt{2}, -\sqrt{2}, -\mathrm i \sqrt{2},\dotsc, \sqrt{n}, \mathrm i \sqrt{n},-\sqrt{n},-\mathrm i \sqrt{n},\dotsc\},
    \end{equation}
    according to increasing absolute value. Clearly, we have $|a_n|=\sqrt{\lceil n/4\rceil}$.
    Since $\sum_{n=1}^\infty 1/(\sqrt{n})^3<\infty$, an entire function with this set of zeros has order \(2\). We construct
    \begin{equation}
        f(z) = \prod_{n} \left(1-z/a_n\right)\exp(z/a_n + z^2/2a_n^2).
    \end{equation}
    The factor
    \begin{equation}
        \prod_n\exp(1+\frac{z}{a_n} + \frac{z^2}{2a_n^2})
        =\exp\left(\sum_n \frac{z}{a_n} + \frac{z^2}{2a_n^2}\right)
        =\exp\left(z\left(\sum_n a_n^{-1}\right) + \frac12 z^2\left(\sum_n a_n^{-2}\right)\right)
    \end{equation}
    converges non-absolutely, as it relies on the cancellations
    \begin{equation}
        \sum_n a_n^{-1} = \sum_n n^{-1}\left(1-\mathrm i-1+\mathrm i\right) = 0,
    \end{equation}
    and
    \begin{equation}
        \sum_n a_n^{-2} = \sum_n n^{-2} \left(1-1+1-1\right) = 0.
    \end{equation}
    Then,
    \begin{equation}
        f(z) = \prod_{n} \left(1-z/a_n\right)
    \end{equation}
    holds, so the pseudogenus is \(0\). More generally, by changing \(\sqrt{\lceil n/4\rceil}\) to a function slower-growing than the square root, one can engineer functions with arbitrarily large genus but with pseudogenus \(0\).
\end{example}

However, in the special case when the zeros and poles lie on the real line, the pseudogenus can differ from the genus by at most one.
\begin{lemma}\label{lemma:pseudogenus-genus-difference}
    Suppose \(f\) is a meromorphic function of finite order, with genus \(g\) and pseudogenus \(\psi\). Suppose that the poles and zeros of \(f\) lie on the real line. If \(g\) is even, then \(\psi=g\). If \(g\) is odd, then \(\psi\in\{g,g-1\}\).
\end{lemma}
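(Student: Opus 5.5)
The plan is to prove the two inequalities \(\psi\le g\) and \(\psi\ge g-1\), and then to sharpen the second one to \(\psi\ge g\) when \(g\) is even. I read the factor in \eqref{eq:def-genus-rel} as \(E_\psi(z/a_i)\), as in \eqref{eq:genus}. The inequality \(\psi\le g\) is immediate. The genus representation \eqref{eq:genus} converges absolutely, so it converges in any ordering, in particular by increasing \(|a_i|\). Grouping repeated zeros and poles into the exponents \(k_i\) only regroups an absolutely convergent product, so it changes nothing.

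For the lower bound I would pass to the logarithmic derivative, where the factor \(E_\psi\) gives the terms
\begin{equation*}
\frac{f'}{f}(z)=\frac nz+P'(z)+\sum_i k_i\Bigl(\frac1{z-a_i}+\frac1{a_i}+\frac z{a_i^2}+\dotsb+\frac{z^{\psi-1}}{a_i^{\psi}}\Bigr),
\end{equation*}
summed in the order of increasing \(|a_i|\). Expanding each bracket at large \(|a_i|\), the leading remainder is \(-k_iz^{\psi}/a_i^{\psi+1}\). Comparing the representations with \(\psi\) and with \(\psi+1\) then gives the following criterion. The representation with \(\psi\) exists if and only if the ordered series
\begin{equation*}
S_m=\sum_i k_i a_i^{-m}
\end{equation*}
converges for \(m=\psi+1\). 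When it does, the polynomial \(P\) absorbs the resulting \(z^{\psi+1}\) term. The remaining tail is absolutely summable, because \(f\) has finite order: by the first fundamental theorem and the bound on \(N_f\), the counting function of zeros and poles grows polynomially, so \(\sum_i|a_i|^{-M}<\infty\) for \(M\) large. I would then run a downward induction from such an \(M\).

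Now I would use that the \(a_i\) are real, so \(a_i^{-m}=\operatorname{sgn}(a_i)^m|a_i|^{-m}\). For even \(m\) the sign of each term is that of \(k_i\). Splitting the sum into zeros (\(k_i>0\)) and poles (\(k_i<0\)), which is how the genera \(p,q\) are defined, convergence of each part with nonnegative terms is the same as absolute convergence. Hence conditional convergence at an even exponent \(m\) forces \(\sum|a_i|^{-m}<\infty\) for zeros and for poles separately, that is, genus \(\le m-1\). So if \(\psi+1\) is even, then \(g\le\psi\), and therefore \(\psi=g\). If \(\psi+1\) is odd, then \(\psi+2\) is even. I would show that convergence of the ordered \(S_{\psi+1}\) implies convergence of the ordered \(S_{\psi+2}\), by Abel summation: the partial sums of \(S_{\psi+1}\) are bounded, and \(|a_i|^{-1}\) is monotone decreasing along the ordering. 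Writing \(a_i^{-(\psi+2)}=a_i^{-(\psi+1)}\cdot\operatorname{sgn}(a_i)|a_i|^{-1}\) leaves an extra sign, so I would run Abel summation separately on the positive and negative half-lines. Even-exponent positivity then gives \(g\le\psi+1\).

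Combining the cases: if \(g\) is even, \(\psi=g-1\) would make \(\psi\) odd and \(\psi+1\) even, which forces \(g\le\psi\), a contradiction; hence \(\psi=g\). If \(g\) is odd, only \(\psi\in\{g-1,g\}\) survives. The main obstacle I expect is the Abel-summation step. First, I must justify that restricting the ordered sum to one half-line preserves convergence, which might not hold when the cancellation is between \(a\) and \(-a\). Second, zeros and poles at nearby points carry \(k_i\) of opposite signs, so they could cancel in \(S_m\) even for even \(m\). To handle both, I would try to use the hypothesis that \(f\) has genus \(g\) as defined through \(p\) and \(q\) separately, and to establish the convergence criterion for the zero product and the pole product independently, if the definition of the pseudogenus allows that splitting. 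If it does not, the lemma would need that separation as an implicit assumption.
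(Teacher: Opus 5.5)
\textbf{The lemma is false for meromorphic \(f\), so the gap you flag cannot be closed.} The obstacle in your last paragraph is fatal, not technical. Your step ``convergence of the ordered \(S_m=\sum_ik_ia_i^{-m}\) at an even \(m\) forces \(\sum|a_i|^{-m}<\infty\) for zeros and for poles separately'' is false. The pseudogenus puts zeros and poles into one list ordered by \(|a_i|\), and at even \(m\) a zero and a nearby pole contribute terms of opposite sign. Take
\[
f(z)=\prod_{n\ge5}\frac{1-z/\sqrt n}{1-z/(\sqrt n+1/n)},\qquad \mathrm d\log f=\sum_{n\ge5}\Omega\bigl([\sqrt n,\sqrt n+1/n]\bigr).
\]
For \(n\ge5\) we have \(\sqrt n<\sqrt n+1/n<\sqrt{n+1}\), so zeros and poles alternate in the ordering. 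Each paired factor is \(1+O(|z|/n^2)\) on compact sets, so this is an \(E_0\)-representation with \(P=0\), and \(\psi=0\). But \(\sum_n n^{-1}=\infty\) while \(\sum_n n^{-3/2}<\infty\), so the zero and pole sequences both have genus \(2\), \(f\) has order \(2\), and \(g=2\). Using \(n^{1/3}\) instead of \(\sqrt n\) (from some large \(n\) on) gives \(g=3\) and \(\psi=0\), so the odd case fails too.

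The paper's proof has the same hole. Its condition \(\sum_i1/a_i^m=0\) drops the weights \(k_i\) (and the coefficients of \(P\)), and positivity of \(a_i^{-m}\) for even \(m\) only helps if all \(k_i\) have one sign. The failure comes from the zero--pole cancellation that gives \(\operatorname{B}(x,y)\) at fixed \(y\) pseudogenus zero (poles at \(-n\), zeros at \(-n-y\)), now occurring at exponent \(2\) rather than \(1\). So the statement needs exactly the hypothesis you suspected: \(f\) or \(1/f\) entire, or the zero and pole products required to converge separately.

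\textbf{Repairs needed even in the entire case.} Your bound \(\psi\le g\) is fine, but the lower bound needs two fixes.

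First, \(P\) cannot ``absorb'' the \(z^{\psi+1}\) term, since the \(\psi\)-representation requires \(\deg P\le\psi\). Only the necessity half of your criterion holds. Also, your conclusion ``genus \(\le m-1\)'' conflates \(g=\max\{\deg P,p,q\}\) with the sequence genera. The case \(g=\deg P>p\) must be handled separately: since \(P\) is unique up to an additive constant, its \(z^g\)-coefficient survives in every \(E_\psi\)-representation with \(\psi<g\).

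Second, the Abel-summation step is false, and you do not need it. Zeros at \(\pm n^{1/4}\) give a convergent ordered \(S_3\) but \(S_4=2\sum_n n^{-1}=\infty\). Instead, compare the partial products of an \(E_\psi\)-representation with the absolutely convergent \(E_g\)-representation on a disc about \(0\), using logarithmic derivatives. This shows directly that the ordered \(S_m\) converge for every \(m\in\{\psi+1,\dots,g\}\).

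With all \(k_i>0\) and \(p\) the genus of the zero sequence, any even \(m\le p\) in that range contradicts \(\sum|a_i|^{-p}=\infty\). Hence \(\psi\ge p-1\), with equality only for odd \(p\). Together with the degree argument, this proves the lemma for entire \(f\).
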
\begin{proof}
    Suppose that \(f\) has genus \(g\), so that
    \begin{equation}
        f(z)=z^n\exp(P(z))\prod_i(E_g(z/a_i))^{k_i}.
    \end{equation}
    Suppose that it has pseudogenus \(\psi<g\). Then
    \begin{equation}
        P(z) + \sum_{i=1}^\infty\sum_{m=\psi+1}^g\frac1m(z/a_i)^m
    \end{equation}
    must have degree \(\psi\), so that in particular we must have
    \begin{equation}\label{eq:pseudogenus-condition}
        \sum_{i=1}^\infty 1/a_i^m = 0\qquad\forall m \in\{\psi+1,\dotsc,g\}.
    \end{equation}
    Since the \(a_i\) are real, this is only possible for \(m\) odd (otherwise \(1/a_i^m>0\)). That is, \(\{\psi+1,\dotsc,g\}\) cannot contain an even number.
    If \(g\) is even, \eqref{eq:pseudogenus-condition} can never hold; if \(g\) is odd, then \eqref{eq:pseudogenus-condition} can only hold for \(m=g\), since \(g-1\) is even, in which case \(\psi=g-1\).
\end{proof}
The following is an example of a function with zeros and poles only on the real line where the genus and pseudogenus differ (by one).
\begin{example}
    Consider the entire function \(\sin z\). Its genus is \(1\) \cite[p.~197]{ahlfors}, since the set of zeros of the \(\sin\) function is \(\pi\mathbb Z\), and \(\sum_{n\in\mathbb Z\setminus\{0\}} 1/(\pi n) = \infty\), but \(\sum_{n\in\mathbb Z\setminus\{0\}} 1/(\pi n)^2 = 1/3<\infty\). On the other hand, its pseudogenus is \(0\). This can be seen from Euler's product formula
    \begin{equation}
        \sin z= z\prod_{n=1}^\infty\left(1-z^2/(\pi n)^2\right)
    \end{equation}
    (which converges absolutely and uniformly on any compact set), which can be factorised as
    \begin{equation}
        \sin z = z\prod_{n=1}^\infty\left(1+z/(\pi n)\right)\left(1-z/(\pi n)\right).
    \end{equation}
    In this formula the zeros are ordered according to increasing absolute value. However, absolute convergence fails if one can reorder the zeros in arbitrary ways, since the convergence of the above product depends on the pairwise cancellation of the two zeros \(\pm \pi n\). Indeed,
    if one considers the above product at \(z=\pi\), then
    \begin{equation}
        \sin \pi = \pi \prod_{n=1}^\infty (1+1/n)(1-1/n),
    \end{equation}
    but the partial product
    \begin{equation}
        \prod_{n=1}^\infty (1+1/n) = \frac21 \cdot \frac32 \cdot \frac43 \dotsm =\infty
    \end{equation}
    telescopes and diverges. Because the convergence of \(\prod_{n\in\mathbb Z\setminus\{0\}}(1-z/(\pi n))\) is not absolute, a reordering of the factors can produce a different result.
\end{example}

\subsection{The Cauchy Transform}\label{ssec:cauchy}
In the continuum limit, instead of \(\log f\) being a sum over the poles and zeros of \(f\), we instead wish to represent \(\log f\) as an integral over the locus of singularities of \(f\); this gives rise to the Cauchy transform, for which our main reference is \cite{akhiezer}.

\begin{definition}[{}{\cite[p.~125]{akhiezer}}]
    Given a signed measure \(\mathrm d\mu\) on the real line, the \emph{Cauchy transform} of the measure \(\mathrm d\mu\) is the function
    \begin{equation}\label{eq:cauchy-transform-defn}
        f(z) = \int_{-\infty}^\infty \frac{\mathrm d\mu(t)}{t-z}.
    \end{equation}
\end{definition}
The term `Cauchy transform' does not appear in \cite{akhiezer} but is used elsewhere in the literature, e.g.\ \cite{jurek}.\footnote{Sometimes the term `Cauchy transform' refers to a related integral where the contour of integration is not over \(\mathbb R\) but instead on the unit circle, as in \cite{cima}.}
When the measure \(\mathrm d\mu\) is supported on the positive half-line, then this transform is known under the name `Stieltjes transform', e.g. \cite[Ch.~VIII]{widder}. The name `Hilbert transform' is the variant where one instead regards \(f(z)\) as a function defined on the real line (rather than on the complement \(\mathbb C\setminus\mathbb R\) on the real line); this requires stronger assumptions on \(\mathrm d\mu\) and the use of the Cauchy principal value to render \eqref{eq:cauchy-transform-defn} well defined \cite[p.~102, Def.\ 2.5.11]{benedetto}.

Note that, although the Cauchy transform \(f\) is defined on \(\mathbb C\setminus\mathbb R\), since \(\mathrm d\mu\) is real we have
\begin{equation}\label{eq:cauchy-transform-half-plane-relation}
    f(\bar z)=\overline{f(z)}.
\end{equation}
That is, it suffices to regard \(f\) as being defined on the upper half-plane, and the values on the lower half-plane are canonically determined by \eqref{eq:cauchy-transform-half-plane-relation}. Furthermore, if the measure \(\mathrm d\mu\) is an unsigned measure, any function \(f\) representable as \eqref{eq:cauchy-transform-defn} will have non-negative imaginary component on the complex upper half-plane:
\begin{equation}
    \Im f(z) = \int_{-\infty}^\infty \Im\frac1{t-z}\,\mathrm d\mu(t)
    =\int_{-\infty}^\infty\frac{\Im z}{t+|z|^2}\,\mathrm d\mu(t)>0\text{ if }\Im z>0.
\end{equation}
Hence, a function \(f\) representable as a Cauchy transform of an unsigned measure belongs to the following class of functions.
\begin{definition}[{}{\cite[p.~92]{akhiezer}, \cite{aronszajn}}]
    A \emph{Nevanlinna function} \(f\colon\{z\in\mathbb C|\operatorname{Im}z>0\}\to\{z\in\mathbb C|\operatorname{Im}z\ge0\}\) is a holomorphic function on the complex upper (open) half-plane whose imaginary part is non-negative everywhere.
\end{definition}
(Such functions are called `functions of class \(P\)' in \cite{aronszajn}.) Nevanlinna functions admit the following integral representation.
\begin{theorem}[Herglotz representation theorem {}{\cite[p.~92]{akhiezer}, \cite{aronszajn}}]
    A function on the complex upper half-plane is Nevanlinna if and only if it is representable as
    \begin{equation}
        f(z) = C+Dz+\int_{-\infty}^\infty\left(\frac1{t-z}-\frac t{1+t^2}\right)\,\mathrm d\mu(t),
    \end{equation}
    where \(C\) is a real number, \(D\) is a non-negative real number, and \(\mathrm d\mu\) is an (unsigned) Borel measure on \(\mathbb R\) such that
    \begin{equation}\label{eq:nevanlinna-measure-finiteness}
        \int_{-\infty}^\infty\frac{\mathrm d\mu(t)}{1+t^2}<0.
    \end{equation}
    Furthermore, this representation is unique; \(C\), \(D\), and \(\mathrm d\mu\) can be recovered from \(f\) as
    \begin{align}
        C &= \Re f(\mathrm i), \\
        D &= \lim_{t\to\infty}\frac{f(t\exp(\mathrm i\theta))}{t\exp(\mathrm i\theta)}\qquad\forall \theta\in(0,\pi),\\
        \mathrm d\mu\left((a,b]\right) &= \lim_{\delta\to0}\frac1\pi\int_{a+\delta}^{b+\delta}\Im f(t+\mathrm i0^+)\,\mathrm dt,\label{eq:stieltjes-perron}
    \end{align}
    where \(f(t+\mathrm i0^+)=\lim_{\epsilon\to0}f(t+\mathrm i\epsilon)\).
\end{theorem}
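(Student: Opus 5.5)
The Herglotz representation theorem is the statement to prove. I would proceed in three stages: sufficiency, the construction of the measure, and the recovery formulas.

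\emph{Sufficiency.} Suppose \(f\) has the stated form with \(D\ge0\), \(C\in\mathbb R\), and \(\mathrm d\mu\) a positive measure such that \(\int(1+t^2)^{-1}\mathrm d\mu(t)\) is finite. (The displayed inequality should read \(<\infty\).) The kernel satisfies
\[
\frac1{t-z}-\frac t{1+t^2}=\frac{1+tz}{(t-z)(1+t^2)},
\]
which is bounded by a constant times \((1+t^2)^{-1}\) locally uniformly for \(z\) in the upper half-plane. Dominated convergence then gives that \(f\) is holomorphic. Its imaginary part is
\[
\Im f(z)=D\,\Im z+\int\frac{\Im z}{|t-z|^2}\,\mathrm d\mu(t)\ge0,
\]
so \(f\) is Nevanlinna.

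\emph{Necessity.} Map the upper half-plane to the unit disc via \(w=(z-\mathrm i)/(z+\mathrm i)\), and consider \(g(w)=-\mathrm i f(z)\). Then \(g\) is holomorphic on the disc with \(\Re g\ge0\). The key step is the Riesz--Herglotz representation on the disc. For \(r<1\), the Poisson formula applied to the positive harmonic function \(\Re g(rw)\) gives
\[
g(rw)=\mathrm i\,\Im g(0)+\int_0^{2\pi}\frac{\mathrm e^{\mathrm i\theta}+w}{\mathrm e^{\mathrm i\theta}-w}\,\mathrm d\nu_r(\theta),
\]
where \(\mathrm d\nu_r=\frac1{2\pi}\Re g(r\mathrm e^{\mathrm i\theta})\,\mathrm d\theta\). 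These are positive measures of total mass \(\Re g(0)\), independent of \(r\), so Helly's selection theorem (weak-* compactness) yields a weak limit \(\nu\) along some subsequence \(r\to1\). This produces the disc representation with a positive finite measure \(\nu\) on the circle.

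Pulling back through \(\mathrm e^{\mathrm i\theta}=(t-\mathrm i)/(t+\mathrm i)\) splits \(\nu\) into two parts. The atom at \(\theta=0\), which corresponds to \(t=\infty\), becomes the term \(Dz\) with \(D=\nu(\{0\})\ge0\). The rest becomes \(\mathrm d\mu(t)=\pi(1+t^2)\,\mathrm d\nu\), and the finiteness of \(\nu\) translates exactly into the integrability condition on \(\mu\). A short algebraic check shows that the Herglotz kernel becomes \(\frac{1}{t-z}-\frac{t}{1+t^2}\) up to the factor \(1+t^2\) and a real constant, which is absorbed into \(C\).

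\emph{Recovery formulas and uniqueness.} Evaluate at \(z=\mathrm i\). There the integrand \(\frac1{t-\mathrm i}-\frac t{1+t^2}\) equals \(\frac{\mathrm i}{1+t^2}\), which is purely imaginary, so \(\Re f(\mathrm i)=C\). For \(D\), divide by \(z=R\mathrm e^{\mathrm i\theta}\) and let \(R\to\infty\). The integral divided by \(z\) tends to zero by dominated convergence, since for \(\theta\) fixed in \((0,\pi)\) we have \(|t-z|\gtrsim|z|\sin\theta\).

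The main obstacle is the Stieltjes--Perron formula. Here \(\Im f(x+\mathrm i\epsilon)/\pi\) equals \(D\epsilon/\pi\) plus the Poisson integral \(P_\epsilon*\mu\), and the standard argument is that \(P_\epsilon*\mu\to\mu\) vaguely as \(\epsilon\to0\). I would integrate over \((a,b]\) and use Fubini. The function \(\int_a^b P_\epsilon(x-t)\,\mathrm dx\) converges to \(1\) on \((a,b)\), to \(\tfrac12\) at the endpoints, and to \(0\) outside, and it is dominated by \(C(1+t^2)^{-1}\) times a constant depending on \(a,b\). This gives
\[
\tfrac12\bigl[\mu((a,b))+\mu([a,b])\bigr].
\]
The stated limit with shifted endpoints \(a+\delta,b+\delta\) is intended to remove the endpoint ambiguity and recover \(\mu((a,b])\) at continuity points. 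That bookkeeping, and interpreting the limit as taken in \(\epsilon\) then \(\delta\), must be handled with care.

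Uniqueness then follows: \(\mu\) is determined on all intervals, hence everywhere, and \(C\) and \(D\) are determined by the formulas above.
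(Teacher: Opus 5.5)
The paper gives no proof of this theorem; it quotes it from the cited references. Your argument is the classical proof and is correct in substance: Cayley transform, the Schwarz formula for \(g(rw)\) with Helly selection to get the Riesz--Herglotz representation on the disc, pull-back to the half-plane, then Stieltjes inversion through the Poisson kernel. You were also right to read the finiteness condition as \(\int(1+t^2)^{-1}\,\mathrm d\mu<\infty\).

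Two small corrections:
\begin{itemize}
\item With your normalisation of \(\nu_r\), the pull-back is exact. For \(\mathrm e^{\mathrm i\theta}=(t-\mathrm i)/(t+\mathrm i)\) one has \(\mathrm i\,\frac{\mathrm e^{\mathrm i\theta}+w}{\mathrm e^{\mathrm i\theta}-w}=\frac{1+tz}{t-z}=(1+t^2)\bigl(\frac1{t-z}-\frac t{1+t^2}\bigr)\). So \(\mathrm d\mu\) is \((1+t^2)\) times the push-forward of \(\nu|_{\theta\neq0}\), with no factor of \(\pi\), and the only real constant is \(C=-\Im g(0)=\Re f(\mathrm i)\).
\item The endpoint bookkeeping needs no continuity points. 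As \(\delta\downarrow0\), the indicators of both \((a+\delta,b+\delta)\) and \([a+\delta,b+\delta]\) converge pointwise to that of \((a,b]\). Dominated convergence then turns your \(\frac12[\mu((a,b))+\mu([a,b])]\) into \(\mu((a,b])\) for every \(a<b\).
\end{itemize}

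You should state explicitly, not merely as a matter of ``care'', that you are proving \(\mu((a,b])=\lim_{\delta\downarrow0}\lim_{\epsilon\downarrow0}\frac1\pi\int_{a+\delta}^{b+\delta}\Im f(t+\mathrm i\epsilon)\,\mathrm dt\). This is not the formula as printed, and the printed version is false in two ways:
\begin{itemize}
\item It takes \(\epsilon\to0\) inside the integrand, which fails whenever \(\mu\) has a singular part. Take \(f(z)=-1/z\), so \(\mu=\delta_0\) and \(C=D=0\). Then \(\Im f(t+\mathrm i0^+)=\lim_{\epsilon\to0}\epsilon/(t^2+\epsilon^2)=0\) for all \(t\neq0\), so the printed right-hand side vanishes although \(\mu((-1,1])=1\). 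By Fatou's theorem, the pointwise boundary values of \(\Im f\) only recover the absolutely continuous part of \(\mu\).
\item Even with the limits iterated in the right order, \(\delta\) must tend to \(0\) from above. For the same \(f\) with \(a=0\), \(b=1\), the limits \(\delta\downarrow0\) and \(\delta\uparrow0\) give \(0\) and \(1\) respectively, so the two-sided limit does not exist.
\end{itemize}
With these two amendments to the statement, your proof establishes the theorem, including uniqueness.
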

As a corollary, a Nevanlinna function is representable as \eqref{eq:cauchy-transform-defn} if and only if
\begin{align}
    \lim_{t\to\infty}\frac{f(t\exp(\mathrm i\theta))}{t\exp(\mathrm i\theta)} &= 0,\label{eq:cauchy-cond1}\\
    \int_{-\infty}^\infty\frac t{\pi(1+t^2)}\,\mathrm d(\Im f(t+\mathrm i0^+)) &= \Re f(\mathrm i).\label{eq:cauchy-cond2}
\end{align}
In this case, the formula \eqref{eq:stieltjes-perron} is known as the Stieltjes--Perron inversion formula \cite{stieltjes,perron}, cf.\ \cite[p.~125]{akhiezer}.
\begin{example}
    The function \(f(z)=\log z\) is Nevanlinna, whose integral representation is \cite[eq.\ (1.9g)]{aronszajn}
    \begin{equation}
        f(z) = \int_{-\infty}^0\left(\frac1{t-z}-\frac t{1+t^2}\right)\,\mathrm dt.
    \end{equation}
    However, it cannot be represented as a Cauchy transform because, although \eqref{eq:cauchy-cond1} holds, \eqref{eq:cauchy-cond2} fails. That is, although \(\Im f=\operatorname H(-x)\) (here \(\operatorname H\) is the step function) defines a measure \(\mathrm d\mu\) satisfying \eqref{eq:nevanlinna-measure-finiteness} (namely, the characteristic measure of the negative real half-line), the expression \(\int_{-\infty}^\infty (t/(1+t^2))\mathrm d\mu(t)\) fails to converge.
\end{example}

\begin{example}
    The function \(f(z)=\log\mleft((z-b)/(z-a)\mright)\) is Nevanlinna if \(a<b\). Its integral representation is \cite[eq.\ (1.9k)]{aronszajn}.
    \begin{equation}
        f(z) = \int_a^b\frac{\mathrm dt}{t-z}.
    \end{equation}
\end{example}

\paragraph{Signed measures.}
The above considered the Cauchy transform of an \emph{unsigned} measure. A function \(f\) that is holomorphic on the upper half-plane if and only if we can represent it as a difference of two Nevanlinna functions, both of which then satisfy \eqref{eq:cauchy-cond1} and \eqref{eq:cauchy-cond2}. In this case, for \eqref{eq:cauchy-transform-defn} to exist in the ordinary sense, the Borel measure \(\mathrm d\mu(t)=\pi^{-1}\Im f(t+\mathrm i0^+)\) must be such that
\begin{equation}
    \int_{-\infty}^\infty\frac t{1+t^2}\,|\mathrm d\mu(t)|<\infty,
\end{equation}
which reduces to
\begin{equation}
    \int_{-\infty}^\infty|\mathrm d\mu(t)|<\infty.
\end{equation}

Similar to the distinction between the genus and the pseudogenus, it is instead convenient to relax the notion of integral in \eqref{eq:cauchy-transform-defn} to only exist in the sense of the Cauchy principal value \(\mathrm{p.v.}\int_{-\infty}^\infty\), i.e.\ as the limit
\begin{equation}
    f(z) = \mathrm{p.v.}\int_{-\infty}^\infty\frac{\mathrm d\mu(t)}{t-z} = \lim_{a\to\infty}\int_{-a}^a\frac{\mathrm d\mu(t)}{t-z}.
\end{equation}
In this case, the necessary condition for the existence of the inverse Cauchy transform is weakened to
\begin{equation}\label{eq:inverse-cauchy-necessary-condition}
    \operatorname{p.v.}\int_{-\infty}^\infty\frac t{1+t^2}\,\mathrm d\mu(t)<\infty
\end{equation}
for \(\mathrm d\mu(t)=\pi^{-1}\Im f(t+\mathrm i0^+)\).

\section{Functions Corresponding to One-Dimensional Positive Geometries}
We now use Nevanlinna theory and the Hadamard factorisation theorem, as discussed in \cref{ssec:nevanlinna},
to constrain which functions can correspond to one-dimensional positive geometries. The fundamental building block is the interval $[a,b]$, which has the canonical form\footnote{In the case that $a=0$, we simply replace $1-x/a$ in the logarithm with $x$.}
\begin{align}
	\Omega([a,b])=\mathrm{d}\log\frac{1-x/a}{1-x/b}=\frac{(b-a)\mathrm{d}x}{(a-x)(b-x)}.
\end{align}
A general one-dimensional positive geometry is simply a union of intervals, and the canonical form distributes additively. We consider the one-dimensional positive geometry
\begin{align}\label{eq:stringy-A}
	\mathcal A = \bigcup_{i=0}^\infty [a_i,b_i].
\end{align}
Assuming that none of the \(a_i\) and \(b_i\) are zero, the corresponding canonical form is
\begin{equation}
    \Omega(\mathcal{A}) = \sum_{i=0}^\infty \mathrm{d}\log\frac{1-x/b_i}{1-x/a_i} = \mathrm d\log\prod_{i=0}^\infty\frac{E_0(x/b_i)}{E_0(x/a_i)}.
\end{equation}
That is, this takes the form of a product of factors of the form \eqref{eq:genus} with \(k=0\). If this product is interpreted as absolute convergence, then this implies that \(f\) must have genus zero. If, however, this infinite product is interpreted only in the sense when \(a_i\), \(b_i\) are ordered according to absolute value, then this implies that \(f\) must have pseudogenus zero (and the genus must be zero or one, according to \cref{lemma:pseudogenus-genus-difference}).

For instance, a differential form as simple as
\begin{equation}
    \Omega = \mathrm d\log(\exp(\frac12x^2)) = x\,\mathrm dx
\end{equation}
\emph{cannot} arise as the canonical form of a positive geometry, since \(\exp(\frac12x^2)\) has genus and pseudogenus two.

The requirement that \(f\) have genus at most one has strong consequences: we have
\begin{equation}
    \sum_i |a_i|^{-2}+|b_i|^{-2} < \infty.
\end{equation}
In particular, the set of poles and zeros \(\{a_1,b_1,a_2,b_2,\dotsc\}\) cannot be very dense. This rules out several kinds of towers of states that couple to massless particles.

More generally, we may consider \emph{weighted} one-dimensional positive geometries, which we define as a piecewise-constant integer-valued function \(\theta\colon\mathbb R\to\mathbb Z\).\footnote{Or, more precisely, equivalence classes of such functions, where we do not care about the value of the function \(f(b)\) at the boundary between two intervals \((a,b)\) and \((b,c)\), where the restrictions \(f|_{(a,b)}\) and \(f_{(c,d)}\) are constant.} Any such function \(\theta\) can be realised as a linear combination of the form
\begin{equation}
    \theta(x) = \sum_i c_i \operatorname H(x-a_i) + d_i \operatorname H(b_i-x)
\end{equation}
for integers \(c_i,d_i\) and real constants \(a_i,b_i\), where \(\operatorname H\) is the Heaviside step function. This is to be thought of as a formal linear combination of half-lines
\begin{equation}
    \mathcal A = \sum_i c_i [a_i,\infty) + d_i (-\infty,b_i].
\end{equation}
A finite interval \([a,b]\) is then represented as the formal linear combination \([a,\infty)-(-\infty,b]\).
The canonical form of a weighted one-dimensional positive geometry is then simply the weighted sums of the canonical forms \(\Omega_{[a_i,\infty)}=\mathrm d\log(x-a_i)\) and \(\Omega_{(-\infty,b_i]}=\mathrm d\log(b_i-x)\), i.e.
\begin{equation}
    \Omega_{\mathcal A} = \sum_i c_i \mathrm d\log(x-a_i) - d_i \mathrm d\log(b_i-x)
    = \mathrm d\log\prod_i \frac{(x-a_i)^{c_i}}{(b_i-x)^{d_i}}.
\end{equation}
Thus, the canonical form is always the exterior derivative of the logarithm of a meromorphic function of pseudogenus zero.

We conclude that a function \(f\colon\mathbb C\to\mathbb C\cup\{\infty\}\) is representable in terms of a weighted one-dimensional positive geometry if and only if it has pseudogenus zero.

For example, consider the function
\begin{equation}\label{eq:stringy-phi3}
    f(x) = \frac{\sin(\pi\alpha'x)}{\sin(\pi\alpha'(c-x))},
\end{equation}
which encodes the colour-ordered amplitudes of the stringy \(\operatorname{Tr}(\phi^3)\) theory considered in \cite{Bartsch:2025mvy}.
It is a meromorphic function of genus one and pseudogenus zero. It can therefore be encoded as a weighted one-dimensional positive geometry --- in fact, an ordinary one-dimensional positive geometry, namely the union of intervals
\begin{equation}
    \mathcal A = [0,c] + \frac1{\alpha'}\mathbb Z,
\end{equation}
where \(+\) denotes Minkowski sum.

\subsection{Ramification for Towers of States}
The boundaries of a positive geometry correspond to poles of the corresponding scattering amplitude. The infinite tower of boundaries present in \textit{e.g.} \eqref{eq:stringy-A} indicates a tower of states, such as those appearing in string theory or Kaluza--Klein towers.
The condition that the pseudogenus be zero (and hence that the genus be at most one) imposes strong constraints on towers of states that can contribute to the four-point amplitude.

The fact that \(p\le1\) means that
\begin{equation}
    \sum_{a\in\operatorname{zeros}(f)\setminus\{0\}}1/|a|^2 < \infty.
\end{equation}
That is, the density of states that contribute to the amplitude cannot blow up too quickly. For a uniform spacing of zeroes, such as for the $\sin$ function, we have
\begin{equation}
    \sum_{n\in\mathbb Z\setminus\{0\}}1/n^2 < \infty,
\end{equation}
so this satisfied. Let us make a continuum approximation and say that the density of zeros is \(\sigma(x)\). Then we require
\begin{equation}
    \int_1^\infty\sigma(x)x^{-2}<\infty.
\end{equation}
This means \(\sigma(x)\sim x\) is ruled out; however, \(\sigma(x)\sim x^{1-\varepsilon}\) is fine for arbitrarily small positive \(\varepsilon\).

Now, for Kaluza--Klein towers with \(n\) compact toroidal directions, we have
\begin{equation}\sigma(x)=O(x^{n/2-1}).\end{equation}
For the stringy analogue to \(\operatorname{Tr}(\phi^3)\) described by \eqref{eq:stringy-phi3}, we have
\(\sigma(x)=O(1)\), which is the same density of states as a Kaluza--Klein tower with two compact toroidal directions.
On the other hand, for the towers which appear in string theory, one has a Hagedorn behaviour,
\begin{equation}
    \sigma(x)\sim\exp(x/T)
\end{equation}
where \(T\) is the Hagedorn temperature.

Therefore, for stringy towers or Kaluza--Klein towers with sufficiently many compact directions, the vast majority of states cannot contribute to any four-point amplitude representable as a weighted positive geometry.

\section{Geometry of Four-Point String Amplitudes}\label{sec:string}
In this section we will consider a geometric description of the four-point tree-level scattering amplitude of open-string tachyons (the Veneziano amplitude) or of closed-string tachyons (the Virasoro--Shapiro amplitude) in bosonic string theory. Although neither admits a one-dimensional positive geometry in the sense described above, we will show that they have a pseudogenus of zero, and hence that their $\mathrm{d}\log$ is the canonical form of an infinite (weighted) positive geometry. There are two interpretations of this result. First, we can interpret this as finding a positive geometry for the integrand of the logarithm of the four-point string amplitudes\footnote{Although similar in wording, this is a distinct notion from the \emph{negative geometries} considered in \cite{Arkani-Hamed:2021iya}, which capture \emph{loop} integrands for the logarithm of the four-point amplitude in $\mathcal{N}=4$ Supersymmetric Yang-Mills Theory.}. Second, these geometries directly encode properties of string scattering: the poles \emph{and} zeros of the amplitude both appear as boundaries of the positive geometry, distinguished by the sign of the residue. Additionally, we will see that the $\mathrm{d}\log$ for the KLT kernel also admits a positive geometry, which allows us to give a fully geometric interpretation to the KLT double copy relations between open and closed strings at four points.

\subsection{Veneziano Amplitude}
A tree-level colour-ordered four-point amplitude of generic open-string states has the form
\begin{equation}
\int_0^1\mathrm dt\,t^{k_3\cdot k_4/4+n_1}(1-t)^{k_2\cdot k_3/4+n_2}
\sim
\operatorname B(x,y) \equiv \mathcal{A}^{\text{open}}(x,y),
\end{equation}
for appropriate \(x\) and \(y\), where \(\operatorname B(x,y)=\Gamma(x)\Gamma(y)/\Gamma(x+y)\) is the beta function and where we have ignored colour factors and polarisations. For instance, in the simplest case, for four open-string tachyons we obtain the Veneziano amplitude
\begin{equation}
\operatorname B(-1-\alpha's,-1-\alpha't),
\end{equation}
with \(s=(p_1+p_2)^2\) and \(t=(p_2+p_3)^2\) the Mandelstam variables. It will thus suffice to consider the geometry of the beta function.

The gamma function has genus and pseudogenus one and therefore is not a product of \(E_0\) factors. Similarly, if one considers (for instance) \(\operatorname B(z,c-z)\) as a univariate function of \(z\) with a fixed \(c\), for a generic value of \(c\), this has genus one. However, it is a minor miracle that the gamma function factors combine in such a way that \(\operatorname B(z,c-z)\) in fact has pseudogenus zero rather than one, so that the Veneziano amplitude is amenable to a positive-geometry representation despite appearances.

Concretely, we have the following Hadamard representation of the full bivariate beta function, which is easy to derive from the Hadamard representation of the gamma function:
\begin{equation}\label{eq:Beta-product}
	\operatorname B(x,y)=\frac{x+y}{xy}\prod_{n=1}^\infty
	\frac{1+(x+y)/n}{(1+x/n)(1+y/n)},
\end{equation}
which is nothing more than a product over the poles and zeros of the beta function (namely, at \(x,y,x+y\in-\mathbb N\), where \(\mathbb N=\{0,1,2,\dotsc\}\)). Notably, while the Hadamard representation of the gamma function uses \(E_1\) (because it has genus and pseudogenus one), the beta function ends up having genus one but pseudogenus zero due to non-trivial cancellations, so that it can be represented using \(E_0\) instead.

Concretely, this means that we can interpret the $\mathrm{d}\log$ of the Veneziano amplitude as the canonical form of an infinite (weighted) positive geometry. Let us consider $\mathcal{A}^{\text{open}}$ as a function of $x$, while keeping $y$ some fixed constant. By expanding the $\mathrm{d}\log$ of equation \eqref{eq:Beta-product}, we find that
\begin{align}\label{eq:open-geometry}
	\mathrm{d}\log \mathcal{A}^{\text{open}}=(\psi(x)-\psi(x+y))\mathrm{d}x=\sum_{n=-\infty}^0 \Omega([n-y,n]),
\end{align}
where $\psi(x)=\partial_x \log\Gamma(x)$ is the digamma function. For values of $y$ between $0$ and $1$, this is a disjoint sum of line segments, whereas we need to consider the regions as weighted positive geometries for $y>1$.

\subsection{Virasoro--Shapiro Amplitude}
For the closed bosonic string, the four-point amplitude of generic external states involves the integral \cite[eq.\ (7.2.42)]{Green:2012oqa} \cite[eq.\ (6.6.22)]{Polchinski:1998rq}
\begin{multline}
	\int_{\mathbb C}\mathrm d^2z\,
	z^{a-1+m_1}\bar z^{a-1+n-1}
	(1-z)^{b-1+m_2}
	(1-\bar z)^{b-1+n_2}\\
	=
	2\sin(\pi(b+n_2))
	\operatorname B(a+m_1,b+m_2)
	\operatorname B(b+n_2,1-a-b-n_1-n_2)
\end{multline}
where \(m_1-n_1,m_2-n_2\in\mathbb Z\). Since this is a product of three factors each with genus one but pseudogenus zero, the $\mathrm{d}\log$ of the closed string amplitude can again be written as a union of infinite weighted positive geometries.

For example, when \(m_1=m_2=n_1=n_2=0\), this corresponds to the Virasoro--Shapiro amplitude, given (up to coupling constants, the Dirac delta, etc.) by \cite[eq.\ (7.2.33)]{Green:2012oqa} \cite[eq.\ (6.6.11)]{Polchinski:1998rq}
\begin{equation}
\Gamma(x,y,z)=\frac{\Gamma(x)\Gamma(y)\Gamma(z)}{\Gamma(x+y)\Gamma(y+z)\Gamma(z+x)}
\end{equation}
with \(x+y+z=1\). Using the Hadamard product
\begin{equation}
\Gamma(z) = z^{-1}\exp(-\gamma z)\prod_{n=1}^\infty (1+z/n)^{-1}\exp(z/n),
\end{equation}
we have
\begin{align}\label{eq:closed-string-product}
	\MoveEqLeft
	\mathcal{A}^{\text{closed}}(x,y)=\Gamma(x,y,1-x-y)\notag\\
	&=\notag\exp(\gamma)
	\frac{(x+y)(1-x)(1-y)}{xy(1-x-y)}\\
	&\qquad\times\prod_{n=1}^\infty
	\frac{(1+(x+y)/n)(1-x/(n+1))(1-y/(n+1))}{(1+x/n)(1+y/n)(1-(x+y)/(n+1))}
	\cdot\frac{1+1/n}{\exp(1/n)}\\
	&=\notag
	\frac{(x+y)(1-x)(1-y)}{xy(1-x-y)}
	\prod_{n=1}^\infty
	\frac{(1+(x+y)/n)(1-x/(n+1))(1-y/(n+1))}{(1+x/n)(1+y/n)(1-(x+y)/(n+1))}.
\end{align}
This factorisation now reveals the poles and zeros of \(f\): they lie at \(x,y,z,x+y,y+z,z+x\in-\mathbb N\). We simply get a product over the poles.

If we now consider $y$ to be a constant and take the $\mathrm{d}\log$ of the closed string amplitude, we find that it can be written as
\begin{align}
	\mathrm{d}\log \mathcal{A}^{\text{closed}} = \sum_{n=-\infty}^0 \Omega([n-y,n]) - \sum_{n=1}^\infty \Omega([n-y,n]).
\end{align}
That is, the $\mathrm{d}\log$ of the closed string amplitude is a positive geometry given by the infinite union of line segments $[n-y,n], \ n\in\mathbb{Z}$, but with a flipped orientation for $n\geq 1$.

\subsection{KLT Double Copy}
The open and closed string amplitudes are related through the \emph{KLT double copy}. For four-point amplitudes, we can define the inverse KLT kernel
\begin{align}
	\mathcal{A}^{\text{KLT}}(x,y) = \frac{\pi}{\tan(\pi x)}+\frac{\pi}{\tan(\pi y)} =\frac{\pi \sin(\pi(x+y))}{\sin(\pi x)\sin(\pi y)}.
\end{align}
The KLT double copy is then equivalent to the statements that
\begin{align}\label{eq:KLT}
	\mathcal{A}^{\text{closed}}= \frac{(\mathcal{A}^{\text{open}})^2}{\mathcal{A}^{\text{KLT}}},
\end{align}
which can easily be verified as a functional identity by using the reflection formula $\Gamma(1-x)\Gamma(x) = \pi/\sin(\pi x)$. 

It was shown in \cite{Bartsch:2025mvy} that the inverse KLT kernel can be given an interpretation as an infinite positive geometry:
\begin{align}
	\mathcal{A}^{\text{KLT}}(x,c-x)\mathrm{d}x = \sum_{n=-\infty}^\infty \Omega([n,n+c]).
\end{align}
However, to geometrise the KLT relations, it is important to note that $\mathrm{d}\log \mathcal{A}^{\text{KLT}}$ \emph{also} admits a positive geometry! Explicitly, we have
\begin{align}
	\mathrm{d}\log\mathcal{A}^{\text{KLT}}(x,y) = \sum_{n=-\infty}^{\infty} \Omega([n-y,n]) = \mathcal{A}^{\text{KLT}}(-x,x+y)\mathrm{d}x.
\end{align}
The function $\mathcal{A}^{\text{KLT}}(-x,x+y)$ can be interpreted as the inverse KLT kernel for a different colour ordering. The fact that the inverse KLT kernel $\mathcal{A}^{\text{KLT}}$ maps to a different colour ordering under the $\mathrm{d}\log$ map is a highly non-trivial relation, and it is far from obvious that the $\mathrm{d}\log$ of a sum of line segments can once again be interpreted as a sum of line segments.

\begin{figure}
	\centering
	\includegraphics[width=\textwidth]{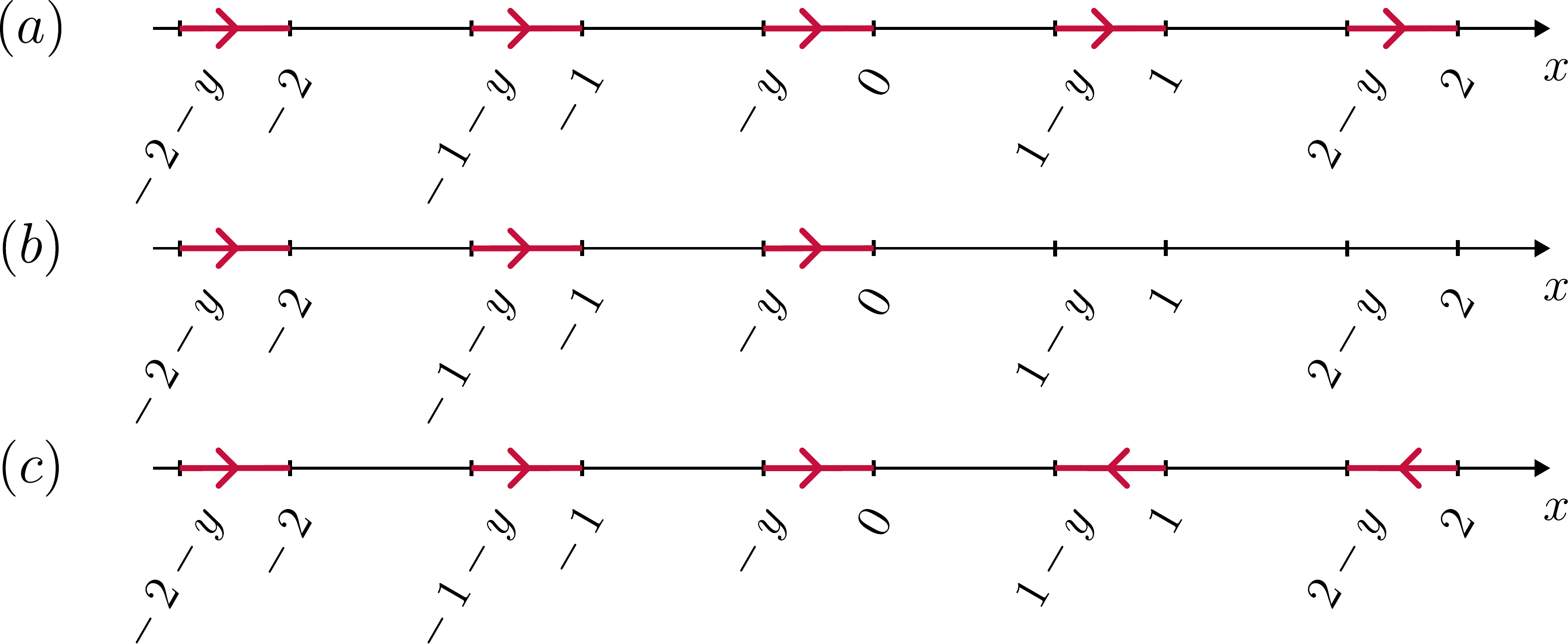}
	\caption{The infinite geometries for four-point string amplitudes: (a) $\mathrm{d}\log\mathcal{A}^{\text{KLT}}(x,y)$ (which is equivalent to $\mathcal{A}^{\text{KLT}}(-x,x+y)\mathrm{d}x$), (b) $\mathrm{d}\log \mathcal{A}^{\text{open}}(x,y)$, (c) $\mathrm{d}\log\mathcal{A}^{\text{closed}}(x,y).$}
	\label{fig:4pt-string}
\end{figure}

We can now interpret the KLT double copy as a triangulation of positive geometries. Taking the $\mathrm{d}\log$ of equation \eqref{eq:KLT}, we get
\begin{align}
	\mathrm{d}\log\mathcal{A}^{\text{closed}} &= 2 \mathrm{d}\log\mathcal{A}^{\text{open}}-\mathrm{d}\log\mathcal{A}^{\text{KLT}}\\
	\sum_{n=-\infty}^0 \Omega([n-y,n]) - \sum_{n=1}^\infty \Omega([n-y,n]) &= 2 \sum_{n=-\infty}^0 \Omega([n-y,n]) - \sum_{n=-\infty}^{\infty} \Omega([n-y,n]).	
\end{align} 
We illustrate this triangulation in figure \ref{fig:4pt-string}. Starting from the inverse KLT kernel $\mathrm{d}\log\mathcal{A}^{\text{KLT}}$, we can arrive at the open string $\mathrm{d}\log\mathcal{A}^{\text{open}}$ by discarding all line segments corresponding to $n\geq 1$, and we can get to the closed string $\mathrm{d}\log\mathcal{A}^{\text{closed}}$ from $\mathrm{d}\log\mathcal{A}^{\text{KLT}}$ by flipping the orientation for all $n \geq 1$.

\section{Continuum Limit of One-Dimensional Positive Geometries}
We will now consider the limit where the infinite line segments become arbitrarily close together and form a continuum and argue that this corresponds to the Cauchy transform as reviewed in \cref{ssec:cauchy}.
Since the end-points of the line segment correspond to the poles of the canonical form, these also get infinitesimally close together and will form a branch cut. This thus allows us to use the positive geometry framework beyond meromorphic functions, potentially incorporating examples such as (integrated) loop amplitudes.

We will let $\rho(t)$ denote the density of line segments. To be explicit, we can define
\begin{align}
	a_k=k\Delta,\quad b_k=k\Delta + \rho(k\Delta)\Delta,
\end{align}
such that
\begin{align}
	\Omega([a_k,b_k])=\frac{\rho(k\Delta)\Delta}{(x-k\Delta)(x-k\Delta-\rho(k\Delta)\Delta)}\mathrm{d}x. 
\end{align}
Taking the sum over $k\in\mathbb{Z}$, and taking the $\Delta\to0$ limit, we get
\begin{align}
	\lim_{\Delta\to0}\sum_{k\in\mathbb{Z}} \Omega([a_k,b_k])= \lim_{\Delta\to 0} \sum_k \frac{\rho(k\Delta)\Delta}{(x-k\Delta)^2} = \int \frac{\rho(t)}{(x-t)^2}\mathrm{d}t.
\end{align}
Thus, in the continuum limit of positive geometries, the canonical form is given by an integral over the density $\rho(t)$ convolved with a double pole. For example, if we want to recover a single line segment $[a,b]$, we take $\rho$ to be the indicator function
\begin{align}
	\rho(t)=\begin{cases}
		1\quad &a\leq t \leq b\\ 0 &\text{otherwise}
	\end{cases} \implies \Omega(\rho)=\int_{-\infty}^\infty \frac{\rho(t)}{(x-t)^2}\mathrm{d}t=\frac{b-a}{(x-a)(x-b)}.
\end{align}
Similarly, if we want to find the canonical form of an infinite but discrete union of line segments, then we can also simply take $\rho$ to be the indicator function. One example is a periodic sum of line segments where $\rho$ is given by a square wave:
\begin{align}
	\rho(t)=\begin{cases}
		1\quad & \lfloor t \rfloor\text{ odd} \\ 0 &\text{otherwise}
	\end{cases} \implies \Omega(\rho)=\int_{-\infty}^\infty \frac{\rho(t)}{(x-t)^2}\mathrm{d}t=\frac{\pi}{\sin(\pi x)}.
\end{align}
If we insist that the line segments in the infinite sum be all disjoint, then this requires $a_{k+1}-b_k\geq 0$. It is easy to see that this imposes the requirement that $0\leq \rho(t) \leq 1$. This is a natural requirement to consider, but it is not necessary for the integral to converge.

We note that we can define an analogous quantity
\begin{align}
	\Phi(x) = \int_{-\infty}^\infty \frac{\rho(t)}{x-t}\mathrm{d}t.
\end{align}  
This $\Phi$ can now be interpreted as the Cauchy transform of $\rho(t)$.

The result of this transform will have a branch cut along the support of $\rho$. The relevance of $\Phi$ that its derivative gives the canonical form $\Omega(\rho)$. For example, for the line segment $[a,b]$, we have
\begin{align}
	\Phi(x)=\log\frac{x-a}{x-b}\implies \partial_x \Phi = \frac{b-a}{(x-a)(x-b)} = \Omega([a,b]).
\end{align}
This allows us to reverse engineer the density $\rho(t)$ for a desired function $\Phi(x)$ using the Stieltjes--Perron inversion formula \eqref{eq:stieltjes-perron}.
\begin{align}
	\rho(x) = \lim_{\epsilon\to 0^+} \frac{\Phi(x+\mathrm i\epsilon)-\Phi(x-\mathrm i\epsilon)}{2\pi\mathrm i}.
\end{align}
This is essentially just the discontinuity of $\Phi(x)$ around the branch cut along the real axis. 

Let us look at a few simple examples. In general, a function $\Omega$ we can obtain through such an integral will have a branch cut along the subset of $\mathbb{R}$ where $\rho'\neq 0$. (If this is at isolated points, like for a step function, then $\Omega$ has a pole there.) A natural smooth version of the square wave is given by
\begin{align}
	\rho(t)=\frac{1+\sin(t)}{2}\implies \Omega(\rho) =\frac{\cos(x)(\log(x)-\log(-x))-\pi\sin(x)}{2},
\end{align}
which has a branch cut along the entire real axis. 
If we only want a branch cut along some subset of the real axis, then we can take $\rho$ such that $\rho'$ only has support on only that subset. For example,
\begin{align}
	\rho(t)=\begin{cases}
		0\quad & t<0\\
		t &0\leq 1 \leq t\\1 & t > 1
	\end{cases} \implies \Omega(\rho) = \log(1-x)-\log(x),
\end{align}
which has a branch cut along $0\leq x \leq 1$.

Now, the general discussion in \cref{ssec:cauchy}, especially the necessary condition \eqref{eq:inverse-cauchy-necessary-condition}, imposes constraints on what kinds of holomorphic functions with branch cuts on the real line can be represented in terms of a continuum weighted positive geometry. In particular, a necessary condition is \eqref{eq:inverse-cauchy-necessary-condition}. If the function \(f(z)\) to be represented is a (loop-level) scattering amplitude, then the branch cut represents a continuum of intermediate states that contribute to the amplitude, and the density \(\rho\) should be non-negative. In this case, we must have
\begin{equation}
    \int \frac t{1+t^2}\,\rho(t)\,\mathrm dt < \infty,
\end{equation}
so that in particular \(\rho(t)\) must fall off as \(\mathcal O(t^{-1-\epsilon})\) for some \(\epsilon>0\) in the large-\(t\) limit. This is analogous to the genus constraint we derived for the discrete case.

\section{Conclusion and Outlook}
In this paper we have considered a broad class of one-dimensional positive geometries, and we have given strong restrictions on the class of functions which can appear. When we consider an infinite, but discrete, sum of line segments, then we have shown that their canonical forms can be found as the logarithmic differential form of a function with \emph{pseudogenus zero}. We have argued that this density of states implies that almost all states in stringy or Kaluza--Klein towers (for three or more compact directions) do not contribute to the scattering amplitude.

Furthermore, we have given a positive geometric interpretation of both open (Veneziano) and closed (Virasoro--Shapiro) string amplitudes by considering their $\mathrm{d}\log$. This geometry encodes the location of both the poles and the zeroes of string amplitudes. In combination with similar results for the (inverse) KLT kernel, this has allowed us to give a completely geometric interpretation of the KLT double copy relations for four-point string amplitudes.

Finally, we have investigated a generalisation of these infinite positive geometries to a continuum case where poles can get arbitrarily close together. We have shown how this continuum limit introduces canonical forms with a branch cut structure. This greatly increases the class of functions which admit a positive geometry, and can potentially be important for applications for (integrated) loop amplitudes. 

In addition to placing some fundamental constraints on the space of functions which admit a positive geometry, we have simultaneously greatly expanded it beyond its traditional scope. This opens several exciting avenues for future research. The scope of the current paper was restricted to the space of one-dimensional positive geometries, and it is natural to ask whether a similar analysis is possible in the multivariate case. Unfortunately, multivariate complex analysis is significantly more complicated from an analytical viewpoint, and as such a full classification of the function space will be difficult. Nonetheless, it will be interesting to investigate the infinite and continuum geometries in higher dimensions. The most urgent question concerns the geometric description of string amplitudes beyond four points. Generalisations of our discussion in section \ref{sec:string} to higher-point string amplitudes would provide an important connection between more general string amplitudes and positive geometries. The fact that the inverse KLT kernel has a positive geometric description to any multiplicity \cite{Bartsch:2025mvy} gives an important starting point for future considerations. 

\section*{Acknowledgements}
JS is supported by OP JAK \v{C}Z.02.01.01/00/22\_008/0004632.

\newcommand\cyrillic[1]{\fontfamily{Domitian-TOsF}\selectfont \foreignlanguage{russian}{#1}}
\bibliographystyle{unsrturl}
\bibliography{biblio}

\end{document}